\newtheorem{lemma}{Lemma}
\newtheorem{theorem}{Theorem}
\def\N{{\mathbb{N}}}
\def\proba{{\mathbb{P}}}
\def\E{{\mathbb{E}}}
\def\A{\mathbf{A}}
\def\B{\mathbf{B}}
\def\service{\mathbf{S}}
\def\X{\mathbf{X}}
\def\Y{\mathbf{Y}}
\newacronym{SNC}{SNC}{Stochastic Network Calculus}
\newcommand\fcfs[1][]{
  \begin{tikzpicture}
    \node[
      draw,
      inner sep=.12cm,
      rectangle split, rectangle split parts=#1, rectangle split horizontal,
      rectangle split empty part width=-.1cm, rectangle split empty part height=.7cm,
    ] (queue) {};
    \fill[white] ($(queue.north west)+(-.05cm,.05cm)$)
    rectangle ($(queue.south west)+(.05cm,-.05cm)$);
    \useasboundingbox ($(queue.south west)+(.05cm,0)$)
    rectangle (queue.north east);
  \end{tikzpicture}
}
\tikzset{
  server/.style = {draw, circle, minimum size=1cm, anchor=west},
}
\pgfplotsset{tikzDefaults/.style={
		enlargelimits=0,
		ymin = .001,
		ymax = 1,
		xmin = 0,
		xmax = 80,
		mark size=2pt,
		line width = 1,
		xticklabel style = {font=\footnotesize, yshift=.1cm},
		yticklabel style = {font=\footnotesize, xshift=.1cm},
		label style = {font=\footnotesize},
		legend style = {font=\footnotesize},
	},
	tikzLocality/.style={
		enlargelimits=0,
		ymin = .0001,
		ymax = 1,
		xmin = 0,
		xmax = 80,
		xticklabel style = {font=\footnotesize, yshift=.1cm},
		yticklabel style = {font=\footnotesize, xshift=.1cm},
		label style = {font=\footnotesize},
		legend style = {font=\footnotesize},
		legend pos = north east,
		line width = 1,
		x label style={at={(axis description cs:0.5,-0.1)}},
		y label style={at={(axis description cs:-0.15,.5)}},
	},
	doobs/.style = {myorange, densely dashed},
	noyau/.style = {myblue},
	simu/.style = {green!70!black, densely dotted},
	line_c/.style = {myblue, densely dashdotted},
}
\pgfplotsset{compat=newest}
\begin{document}

\title{Of Kernels and Queues: when network calculus meets analytic combinatorics}

\author[1]{Anne Bouillard}
\author[1,2]{C\'eline Comte}
\author[1]{\'Elie de Panafieu}
\author[1]{Fabien~Mathieu\thanks{The authors are members of LINCS, see \href{http://www.lincs.fr}{http://www.lincs.fr}.}}

\affil[1]{Nokia Bell Labs, France}
\affil[2]{T\'el\'ecom ParisTech, France}

\subtitle{NetCal 2018 (Author version)}

\maketitle

\begin{abstract}
	Stochastic network calculus is a tool for computing error bounds on the performance of queueing systems.
  However, deriving accurate bounds for networks consisting of several queues or subject to non-independent traffic inputs is challenging.
  In this paper,
  we investigate the relevance
  of the tools from analytic combinatorics,
  especially the \emph{kernel method},
  to tackle this problem.
  Applying the kernel method
  allows us to compute the generating functions of the queue state distributions
  in the stationary regime of the network.
  As a consequence,
  error bounds with an arbitrary precision can be computed.
  In this preliminary work,
  we focus on simple examples which are representative of the difficulties that the kernel method allows us to overcome. 
\end{abstract}

\section{Introduction}
\label{eq:intro}

The development of new wireless communication technologies (5G) shed a new light on queueing theory, as the strong requirements on buffer occupancy, latencies, and reliability,
bring the need for accurate dimensioning rules. In many scenarios, data packets arrive by batches and are processed by a server that can deal with a fixed number of packets per time slot~\cite{Sehier17}. The $ G/D/1 $ queue is thus a natural model.
	
A powerful tool to analyze such queues is \gls{SNC}~\cite{FR15}.
The aim of \gls{SNC} is to derive precise error bounds on the performance of systems,
combining deterministic network calculus and probabilistic tools.

Among the techniques developed so far, the Tailbounded approach~\cite{JY08} introduces a violation probability in the parameters of the deterministic setting.
It makes possible the computation of error bounds in networks, like in~\cite{CBL06}, but these bounds are usually loose.
A second technique, introduced in~\cite{Chang2000}, relies on moment generating functions. It can be very accurate for one queue.
For example, in~\cite{PC14, CP15}, the authors obtain tight upper and lower bounds
for the single-server case under various service policies and arrival processes,
using martingales and Doob's inequality.
However, for more general topologies, the method becomes non applicable due to interdependencies between the processes.
Recently, some (looser) bounds have been computed using H\"older's inequality~\cite{Beck16, NS17}.

The use of generating functions to investigate random processes
is the core principle of \emph{analytic combinatorics},
a subfield of combinatorics (see \cite{FS09}).
This community developed mathematical tools to study random walks \cite{FG00},
such as the \emph{kernel method} \cite{BaFl02, BoMi10}, described later.
The link between random walks and queueing theory is known
and results on the former were transferred to the latter \cite{FaIaMa17}.

In this article, we show how generating functions
and the kernel method can be applied to derive precise results on queueing systems.

In \S\ref{sec:gf}, we first recall the main definitions and notations of generating functions.
The main contribution of the paper is given in \S\ref{sec:oneone}, where we show in detail how to apply the kernel method to study the $GI/D/1$ queue.
Although the result itself is well-known (we retrieve the Pollaczek-Khinchine formula), the interest of the analysis is that it contains all the pieces for further extensions, such as several flows of packets, several queues, or non i.i.d.\ arrivals. Some of these extensions are developed in \S\ref{sec:ext}: random service, multi-flow and multi-queue.
Finally, we confront our results with simulations in \S\ref{sec:simu}.

\section{Generating functions}
\label{sec:gf}

In this section, we recall some basics of generating functions. Let $(a_n)_{n \geq 0}$ be a sequence of non-negative numbers. Its generating function is the formal series 
$$A(u) = \sum_{n \geq 0} a_n u^n.$$
The $n$-th monomial $a_n$ will also be denoted by $[u^n]A(u)$.
In combinatorics, $a_n$ is often the number of objects of size $n$ within a given family.
In probability, $a_n$ is usually the probability that a random variable $\A$ with values in $\N$ is equal to $n$:
$$A(u) = \sum_{n \geq 0} \proba(\A = n) u^n.$$
In that case, we write $\A \sim A$;
the convergence radius $\rho$ of the function $A$ is at least $1$, $A(1) = 1$, $A'(1) = \E[\A]$,
and we assume $\lim_{u \to \rho}A(u) = +\infty$ to simplify the asymptotic analyses.

Two elementary operations can be performed on generating functions. Suppose that $A$ and $B$ are the generating functions of two random variables $\A$ and $\B$, respectively. 
\begin{enumerate}
	\item If the events $\{\A=n\}$ and $\{\B=n\}$ are disjoint for each $n \in \N$, then \\
    $$
    A(u) + B(u) = \sum_{n \geq 0} \proba(\{\A=n\} \cup \{\B=n\})u^n.
    $$
	\item If $\A$ and $\B$ are independent, then $A(u) B(u)$ is the generating function of the r.v. $\A+\B$.
\end{enumerate}

Consider the example of a Galton-Watson tree,
which is a branching process where the number of children of each node is i.i.d. with distribution given by the generating function $A$. 
The number of nodes of the tree is
$$\X = 1 + \sum_{k=1}^{\A} \X_k,$$
where $\A \sim A$ is the number of children of the root
and $\X_k$ is the number of nodes in the subtree rooted at the $k$-th child of the root.
$\X_k$ has the same distribution as $\X$, hence the same generating function, denoted by $T_A$. Therefore, we obtain 
\begin{equation}
\label{eq:gw}
T_A(u) = u A(T_A(u)).
\end{equation}
This equation characterizes $T_A$.
\figurename~\ref{fig:Tu} shows how $T_A(u)$ is computed.
Being the generating function of a probability distribution,
$T_A(u)$ must be a solution of Eq.~\eqref{eq:gw} that is analytic at $0$,
also known as a \emph{small root} of the equation.
$T_A(u)$ is then the abscissa coordinate of the first intersection of $A(x)$ with the line $x/u$. There is a maximal value $\rho_{T_A}>1$ of $u$ for which a root exists.

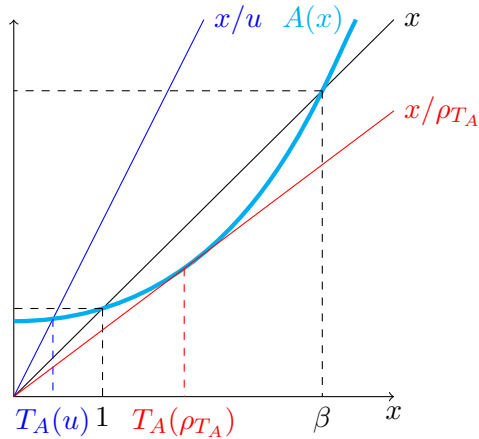
\begin{figure}[htbp]
	\centering
	\begin{tikzpicture}	
	\draw[->, name path=axex] (0,0) -- (5,0)node[below] {$x$};
	\draw[->, name path=axey] (0,0) -- (0,5);
	\draw[name path=id] (0,0) -- (5,5) node[right] {$x$};
	\draw[blue, name path=id2] (0,0) -- (2.5,5) node[right] {$x/u$};
	\draw[ultra thick,cyan, name path=phi] (0,1) .. controls (3,1)  and (4,4) .. (4.5,5) node[left] {$A(x)$};
	\path[name intersections={of=id and phi}];
	\coordinate (Un)  at (intersection-1);
	\coordinate (beta)  at (intersection-2);
	\path[name intersections={of=id2 and phi}];
	\coordinate (Uz)  at (intersection-1);
	
	\path[name path=h1] (beta) --++ (-4.3,0);
	\path[name intersections={of=h1 and axey}];
	\coordinate (ybeta) at (intersection-1);
	\draw[dashed] (beta) -- (ybeta);
	
	\path[name path=v1] (beta) --++ (0,-4.3);
	\path[name intersections={of=v1 and axex}];
	\coordinate (xbeta) at (intersection-1);
	\draw[dashed] (beta) -- (xbeta) node[below] {$\beta$};
	
	\path[name path=h2] (Un) --++ (-1.3,0);
	\path[name intersections={of=h2 and axey}];
	\coordinate (yUn) at (intersection-1);
	\draw[dashed] (Un) -- (yUn);
	
	\path[name path=v2] (Un) --++ (0,-1.3);
	\path[name intersections={of=v2 and axex}];
	\coordinate (xUn) at (intersection-1);
	\draw[dashed] (Un) -- (xUn) node[below] {$1$};
	
	\path[name path=v3] (Uz) --++ (0,-1.2);
	\path[name intersections={of=v3 and axex}];
	\coordinate (xUz) at (intersection-1);
	\draw[dashed, blue] (Uz) -- (xUz) node[below] {$T_A(u)$};
	
	\draw[red, name path=id0] (0,0) -- (5,3.79) node[right] {$x/\rho_{T_A}$};
	\path[name intersections={of=id0 and phi}];
	\coordinate (U0)  at (intersection-1);
	\path[name path=v4] (U0) --++ (0,-1.8);
	\path[name intersections={of=v4 and axex}];
	\coordinate (xU0) at (intersection-1);
	\draw[dashed, red] (U0) -- (xU0) node[below] {$T_A(\rho_{T_A})$};
	
	\end{tikzpicture}
	\caption{Equation $T_A(u) = uA(T_A(u))$.}
	\label{fig:Tu}
\end{figure}

By deriving both sides at $u=1$,
one gets that $\E[\X] = 1 + \E[\X]\E[\A]$,
so that $\E[\X] = (1-\E[\A])^{-1}$ if $\E[\A] <1$.

Adopting a combinatorics viewpoint allows us to consider generating functions that do not represent a distribution. For example, $T_A^k$ is the generating function of the distribution of the total size of $k$ independent Galton-Watson trees, and $\frac{1}{1-T_A} = \sum_{k\geq 0} T_A^k$ is the sum of the distributions for all possible $k$. Let $(\X_j)_{j \geq 0}$ denote the sequence of the sizes of i.i.d.~Galton-Watson trees, then
$$[u^n]\frac{1}{1-T_A(u)} = \proba(\exists k,~ \X_1 + \cdots + \X_k = n).$$
Studying the behavior of this series will prove useful to derive the asymptotic probability that an arbitrary number of trees has a given total size. As $T_A(u)<1$ for all $0\leq u<1$ and $T_A(1)=1$, we can apply the result of~\cite[p. 294, Th. V.1]{FS09}: 
\begin{equation}
\label{eq:asympt-trees}
[u^n]\frac{1}{1-T_A(u)} \underset{n\to \infty}{\sim} \frac{1}{T_A'(1)} = 1-\E[\A].
\end{equation}

All the definitions can be extended to the multivariate case.

\section{The single-server queue}
\label{sec:oneone}

In this section, we present the simple example of a single-server
queue with one flow of packets, as depicted in \figurename~\ref{fig:oneone}. The results presented here are not new (we eventually rediscover the Pollaczek-Khinchine formula, and apply tools developed by \cite{BaFl02}), but our aim is to present the method that will be generalized later.

\subsection{Queueing model}
\label{subsec:model}

The queue is initially empty.
At each time slot $t\geq 1$, one packet (if any) is served and then $\A_t$ packets arrive.
The sequence $(\A_t)_{t\geq 1}$ is i.i.d.\ with generating function $A$ and mean $\lambda < 1$.
We let $\X_t$ denote the number of packets in the queue at the end of time slot $t$.
The system is driven by the equations
\begin{equation}
  \label{eq:dyn1}
  \X_0 = 0 \text{ and }\X_{t+1} = (\X_t - 1)_+ + \A_{t+1},
  \quad \forall t \geq 0,
\end{equation}
where $(\cdot)_+ = \max(\cdot,0)$.

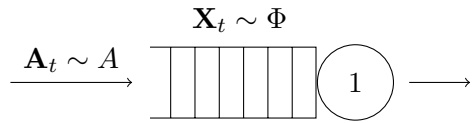
\begin{figure}[htbp]
  \centering
  \begin{tikzpicture}
    \node[inner sep=0] (queue) {\fcfs[7]};
    \node[anchor=south] at ($(queue.north)+(.2cm,0)$) {$\X_t \sim \Phi$};

    \draw[->] ($(queue.west)-(1.6cm,0)$) --
    node[midway, above] {$\A_t \sim A$}
    (queue.west);

    \node[server] (server) at (queue.east) {1};
    \draw[->] ($(server.east)+(.2cm,0)$) -- ($(server.east)+(1cm,0)$);
  \end{tikzpicture}
  \caption{A single-server queue crossed by a single flow.
  The server processes one packet during each time slot.}
  \label{fig:oneone}
\end{figure}

\subsection{Generating function}
\label{subsec:gen}

We define the generating function of the queue state as
\begin{equation}
  \label{eq:oneone-phidef}
  \Phi(u,z) = \sum_{n \geq 0} \sum_{t \geq 0} \proba(\X_t = n) u^n z^t.
\end{equation}
For each $t \in \N$,
taking the coefficient of $z^t$ yields
$$
\left[ z^t \right]\Phi(u,z) = \sum_{n \geq 0} \proba(\X_t = n) u^n,
$$
which is the generating function of $\X_t$.
We will show that

\begin{lemma}
	\begin{equation}
	\label{eq:oneone-intuitive}
	\Phi(u,z) =
	1 + zA(u)
	\left[ (\Phi(u,z) - \Phi(0,z))u^{-1} 
	+ \Phi(0,z) \right].
	\end{equation}
\end{lemma}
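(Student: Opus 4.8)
The plan is to work slot by slot and then package everything as a generating function in $z$. Write $\phi_t(u) := [z^t]\Phi(u,z) = \E[u^{\X_t}]$ for the generating function of $\X_t$, so that $\Phi(u,z) = \sum_{t\geq 0}\phi_t(u)z^t$ and, in particular, $\Phi(0,z) = \sum_{t\geq 0}\proba(\X_t=0)z^t$. The initial condition $\X_0=0$ gives $\phi_0(u)=1$, which will be the source of the constant term $1$ in the claimed identity.

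First I would translate the recursion \eqref{eq:dyn1} into a relation between $\phi_{t+1}$ and $\phi_t$. Since $\X_t$ is a deterministic function of $\A_1,\dots,\A_t$ only, it is independent of $\A_{t+1}\sim A$; by the product rule for the generating function of a sum of independent variables, $\phi_{t+1}(u) = A(u)\,\E\!\left[u^{(\X_t-1)_+}\right]$. The key manipulation is to evaluate $\E[u^{(\X_t-1)_+}]$ by isolating the atom at $0$: on the event $\{\X_t=0\}$ one has $(\X_t-1)_+=0$, contributing $\proba(\X_t=0)=\phi_t(0)$, whereas on $\{\X_t\geq 1\}$ one has $(\X_t-1)_+=\X_t-1$, contributing $u^{-1}\sum_{n\geq 1}\proba(\X_t=n)u^n = u^{-1}\big(\phi_t(u)-\phi_t(0)\big)$. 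This yields the per-slot identity
\[
\phi_{t+1}(u) = A(u)\left[\big(\phi_t(u)-\phi_t(0)\big)u^{-1} + \phi_t(0)\right].
\]

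Finally I would multiply by $z^{t+1}$ and sum over $t\geq 0$. On the left, $\sum_{t\geq 0}\phi_{t+1}(u)z^{t+1} = \Phi(u,z) - \phi_0(u) = \Phi(u,z)-1$. On the right, linearity lets me pull the summation inside the bracket, recognising $\sum_{t\geq 0}\phi_t(u)z^{t+1} = z\Phi(u,z)$ and $\sum_{t\geq 0}\phi_t(0)z^{t+1} = z\Phi(0,z)$; this reproduces exactly the right-hand side of \eqref{eq:oneone-intuitive}, and rearranging proves the lemma.

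The computation is essentially bookkeeping, so I do not expect a real obstacle; the one point that needs care is the handling of the $(\cdot)_+$ operator, i.e.\ separating out the mass at $\X_t=0$ so that the division by $u$ is legitimate — this is precisely why the term $\Phi(0,z)$ must appear. A secondary, minor point is to ensure the interchanges of summation are valid: reading the identity at the level of formal power series in $z$ (whose coefficients are power series in $u$ with nonnegative coefficients summing to $1$) makes every rearrangement automatic, so no analytic convergence question intervenes at this stage.
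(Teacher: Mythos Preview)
Your proof is correct and follows essentially the same route as the paper: derive a one-step relation from the dynamics \eqref{eq:dyn1} by isolating the atom at $\X_t=0$ to handle $(\cdot)_+$, then multiply by $z^{t+1}$ and sum over $t$. The only cosmetic difference is that you work at the level of the probability generating functions $\phi_t(u)=\E[u^{\X_t}]$ and invoke the product rule for independent sums, whereas the paper writes the same convolution explicitly on the coefficients $\proba(\X_t=n)$ before summing; the two computations are equivalent.
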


\begin{proof}[Sketch of proof]
Eq.~\eqref{eq:dyn1} implies $\X_0 = 0$ and for any $t \geq 1$, with the notation $a_n = \proba(\A = n)$, 
\begin{align*}
	\proba(\X_t = n-1) =\ & a_{n-1} (\proba(\X_{t-1} = 0) + \proba(\X_{t-1} = 1))
    + \sum_{m=2}^n a_{n-m} \proba(\X_{t-1} = m).
\end{align*}
Multiplying this relation by $u^n z^t$, summing over $n$, $t$ and dividing both sides by $u$ leads to the equation of the lemma.
The formula can also be directly derived using the \emph{Symbolic Method} \cite{FS09}.
\end{proof}

Eq.~\eqref{eq:oneone-intuitive} can be rewritten as
\begin{equation}
\label{eq:oneone-practical}
\Phi(u,z) \left[ 1 - z A(u)u^{-1} \right]
  = 1 - \Phi(0,z) z A(u) \left[ u^{-1} - 1 \right].
\end{equation}

Although Eq.~\eqref{eq:oneone-practical}
completely characterizes $\Phi(u,z)$,
it is not straightforward
to derive an explicit formula for $\Phi(u,z)$ from it, as we would need an expression for $\Phi(0,z)$. This expression will be obtained with the kernel method. 

\subsection{Kernel method}
\label{subsec:kernel}

When the left-hand side of Eq.~\eqref{eq:oneone-practical} cancels, so does the right-hand side. The kernel method \cite{BaFl02, BoMi10} consists in taking $u = U(z)$ such that the second factor of the left-hand side cancels. Here, $U(z)$ is implicitly defined by the equality
$U(z) = z A(U(z))$, and we recognize from Eq.~\eqref{eq:gw} the size distribution
of a Galton-Watson tree where the offspring distribution has the generating function $A$.
Therefore, we have $U = T_A$.

Injecting $T_A(z)$ in Eq.~\eqref{eq:oneone-practical}
cancels its left-hand side, and its right-hand side can be rewritten as
\begin{equation*}
  \Phi(0,z) = \frac1{1 - T_A(z)}.
\end{equation*}
Going back to Eq.~\eqref{eq:oneone-practical}, we finally obtain
\begin{equation}
  \label{eq:oneone-phires}
  \Phi(u,z) = 
  \frac
  { 1 + \frac1{1 - T_A(z)} z A(u) \left( 1 - u^{-1} \right) }
  { 1 - z A(u)u^{-1} }.
\end{equation}

The kernel method has the following interpretation in terms of the queue sample paths.
The generating function $\Phi(0,z) = \sum_{t \geq 0} \proba(\X_t = 0) z^t$ is associated with the probability of having an empty queue.
Consider the duration between two consecutive instants when the queue is empty, which we call an \emph{inter-empty period}.
It was showed in~\cite{Kendall} that
we can build a Galton-Watson tree with offspring distribution $A$ from an inter-empty period:
each node represents a time slot;
its children are the time slots when the packets arrived during this time slot are served.
Having an empty queue at time $t$ means that
the realization between times $0$ and $t$ is made up of an arbitrary number of inter-empty periods.
This corresponds exactly to $\frac{1}{1-T_A(z)}$, where $T_A$ is as defined in Eq.~\eqref{eq:gw}.

\subsection{Asymptotic performance}
\label{subsec:perf}

In this paragraph, our aim is to bound the probability that $\X_t$ exceeds some value $R$ in stationary regime.
Note that, by monotony, this will also be an upper bound for the initially empty queue.
We proceed in two steps.
We first compute $\Pi$, the generating function of the stationary distribution of $(\X_t)$,
and then we derive the asymptotic behavior of $\Pi$.

\paragraph{Computation of $\Pi$}

We know that, under the stability condition $A'(1) = \lambda < 1$, the distribution of $\X_t$ converges to a stationary distribution $\pi$ as $t$ tends to $+\infty$.
The first step of our analysis consists in finding the generating function $\Pi$ of this distribution $\pi$.
Recall that, for each $t \in \N$, the generating function of $\X_t$ is $\Pi_t(u) = [z^t]\Phi(u,z).$
By~\cite[p. 624]{FS09}, it suffices to study
the limit of $\Pi_t(u)$ as $t$ tends to $+\infty$, when $u$ is fixed.
The obtained limit is exactly $\Pi(u)$.

Let us fix $u = u_0$.
We see in Eq.~\eqref{eq:oneone-phires} that $\Phi(u_0,z)$ has two potential poles, 1 and $u_0/A(u_0)$.
It can be checked that $T_A(\frac{u_0}{A(u_0)}) = u_0$, so that
$u_0 / A(u_0)$ is actually not a pole.
In order to derive the asymptotic behavior of $\Pi_t(u_0)$ as $t$ tends to $+\infty$,
we first compute a simpler equivalent of $\Phi(u_0,z)$ in the neighborhood of its pole $z=1$.
After some rewriting, we obtain
$$\Phi(u_0,z) = \frac{u_0}{u_0-zA(u_0)} + \frac1{1-T_A(z)} \frac{zA(u_0)(u_0-1)}{u_0-zA(u_0)}.$$
As a consequence, 
$$\Phi(u_0,z) \underset{z\to 1}{\sim}  \frac{u_0}{u_0-A(u_0)} + \frac{1}{1-T_A(z)} \frac{A(u_0)(u_0-1)}{u_0-A(u_0)},$$
and from Eq.~\eqref{eq:asympt-trees}, the terms are equivalent to 
$$[z^t]\Phi(u_0,z) \underset{t\to \infty}{\sim} (1-\lambda) \frac{A(u_0)(u_0-1)}{u_0-A(u_0)}.$$

Therefore, the generating function of $\pi$ is equal to the one given by the Pollaczek-Khinchine formula 
$$\Pi(u) = (1-\lambda) \frac{A(u)(u-1)}{u-A(u)}.$$

\paragraph{Performance} 

The second solution $\beta$ of the equation $u = A(u)$ is the convergence radius of the function $\Pi$
(with $\beta = +\infty$ in the degenerate case where $A(u)$ is linear).
The error bound, \textit{i.e.}~the probability that the buffer occupancy is at least $R$, is $\sum_{n\geq R} \pi(n)$. Its generating function is
\begin{align}
\label{eq:error}	
 E(u) = \sum_{R\geq 0} \bigg(\sum_{n\geq R} \pi(n)\bigg)u^R & = \frac{1-u\Pi(u)}{1-u}.
\end{align} 
The asymptotic analysis of this generating function yields
\begin{theorem}
With $\X\sim \Pi$,
\begin{equation}
\label{eq:kernel_simple}
  \proba(\X \geq R) \underset{R\to\infty}{\sim} (1-\lambda)\frac{\beta}{A'(\beta)-1} \beta^{-R} \text{.}
\end{equation}
\end{theorem}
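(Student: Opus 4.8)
The strategy is standard singularity analysis applied to the error generating function $E(u) = \frac{1 - u\Pi(u)}{1-u}$ from Eq.~\eqref{eq:error}. The idea is that the asymptotics of $\proba(\X \geq R) = [u^R]E(u)$ are governed by the dominant singularity of $E(u)$, and that this singularity is the simple pole at $u = \beta$, the second root of $u = A(u)$.

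\textbf{Step 1: Locate the dominant singularity.} First I would substitute the Pollaczek--Khinchine expression $\Pi(u) = (1-\lambda)\frac{A(u)(u-1)}{u-A(u)}$ into $E(u)$ and simplify. The apparent pole of $E(u)$ at $u=1$ must cancel: since $\Pi(1) = 1$ (it is a probability generating function), the numerator $1 - u\Pi(u)$ vanishes at $u=1$, so $u=1$ is a removable singularity, as it must be (otherwise $\sum_R \proba(\X\geq R)$ would diverge, contradicting $\E[\X] < \infty$). After this cancellation, the only remaining singularity in $|u| \leq \beta$ is the simple pole at $u = \beta$, coming from the factor $u - A(u)$ in the denominator of $\Pi$. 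One should also check that $\beta$ is strictly dominant, i.e.\ there is no other root of $u = A(u)$ of modulus $\leq \beta$; this follows from convexity of $A$ on the real axis and the fact that $A$ has non-negative coefficients (so $|A(u)| \leq A(|u|)$).

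\textbf{Step 2: Compute the residue and transfer.} Near $u = \beta$, write $u - A(u) \sim (1 - A'(\beta))(u - \beta)$, so $\Pi(u) \sim (1-\lambda)\frac{A(\beta)(\beta-1)}{(1-A'(\beta))(u-\beta)} = (1-\lambda)\frac{\beta(\beta-1)}{(1-A'(\beta))(u-\beta)}$ using $A(\beta) = \beta$. Hence $E(u) = \frac{1 - u\Pi(u)}{1-u}$ has, at $u = \beta$, residue $-\frac{\beta \cdot (1-\lambda)\beta(\beta-1)/(1-A'(\beta))}{1-\beta} = (1-\lambda)\frac{\beta^2}{1 - A'(\beta)}\cdot\frac{\beta-1}{\beta-1}\cdot(-1) $; after careful bookkeeping of the signs (note $A'(\beta) > 1$ since $\beta$ is the \emph{second} crossing, so $1 - A'(\beta) < 0$) this collapses to a clean constant. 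Extracting the polar part $\frac{c}{\beta - u} = \frac{c}{\beta}\sum_R \beta^{-R} u^R$ and invoking the transfer theorem (meromorphic asymptotics, \cite[Ch.~IV]{FS09}) gives $[u^R]E(u) \sim (1-\lambda)\frac{\beta}{A'(\beta) - 1}\beta^{-R}$, matching Eq.~\eqref{eq:kernel_simple}.

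\textbf{Main obstacle.} The conceptual content is routine; the delicate part is purely bookkeeping: correctly cancelling the spurious pole at $u=1$ and then tracking all the signs in the residue at $u=\beta$ (in particular remembering $A'(\beta) > 1$, so that the final constant is positive). A secondary point requiring a short argument is the claim that $\beta$ is the unique dominant singularity — one must rule out other complex roots of $u = A(u)$ inside the disc of radius $\beta$, which is where the positivity of the coefficients of $A$ and the strict convexity of $A$ on $[1,\rho)$ are used. The degenerate linear case ($\beta = +\infty$) should be mentioned separately, as it is already flagged in the statement preceding the theorem.
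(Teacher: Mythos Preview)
Your proposal is correct and follows exactly the approach the paper intends: the paper gives no detailed proof, stating only that ``the asymptotic analysis of this generating function yields'' the result, and your plan spells out precisely that analysis --- cancel the removable singularity at $u=1$, identify the simple pole at $u=\beta$ as the dominant singularity of $E(u)$, compute the residue there, and apply meromorphic transfer from \cite{FS09}. Your residue computation and sign bookkeeping (using $A(\beta)=\beta$ and $A'(\beta)>1$) are correct and lead to the stated constant.
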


\section{Extensions of the single-server queue}
\label{sec:ext}

The analysis in the previous section shows that
deriving an equation satisfied by the generating function from the system dynamics is the easy step;
solving this equation is harder.
We now consider a few simple extensions of the model of \S \ref{subsec:model},
where the kernel method allows us to perform the analysis and derive
explicit formulas for the performance metrics.

\subsection{Random service}
\label{subsec:oneone-p}

We consider a first extension of the model of \S \ref{subsec:model}
where the service is random.
Specifically,
at each time slot $t \ge 1$,
the server processes one packet (if any)
with some probability $p > \lambda$,
and zero packet otherwise.
The system is driven by the equations
$$
\X_0 = 0
~\text{and}~
\X_{t+1} = (\X_t - \service_t)_+ + \A_{t+1},
\quad \forall t \geq 0,
$$
where $(\service_t)_{t \in \N}$ is a sequence of independent,
Bernoulli distributed random variables with parameter $p$.
The corresponding generating function is $S(u) = 1 - p + pu$.

The generating function $\Phi$ of the system state is
as defined in \eqref{eq:oneone-phidef}.
The equation satisfied by $\Phi$
is a rewriting of Eq.~\eqref{eq:oneone-practical}, where $u^{-1}$ is replaced by $S(u^{-1})$:
\begin{align*}
  \Phi(u,z) [ 1 - z A(u) S( u^{-1} ) ]
  = 1 - \Phi(0,z) z A(u) [ S( u^{-1} ) - 1 ].
\end{align*}

Applying the kernel method consists in choosing $u = U(z)$
such that 
$z A(U(z)) S(U(z)^{-1}) = 1$.
We can rewrite this as
$U(z) = G(zA(U(z)))$,
where $G$ is the generating series of the geometric distribution (defined on the set of positive integers) with parameter $p$:
$$
G(s) = \frac{ps}{1 - (1-p)s}.
$$
In much the same way as in \S \ref{subsec:kernel}, we obtain
$$
\Phi(0,z)
= \frac1{1 - z A(U(z))}
= \frac1{1 - T_{A \circ G}(z)}.
$$
The second equality
holds because $\bar{U}(z) = z A(U(z))$
satisfies the equation $\bar{U}(z) = z A(G(\bar{U}(z)))$,
so that $\bar{U}$ is also the generating function
of the size of a Galton-Watson tree
with offspring distribution $A \circ G$.
Finally, we obtain
$$
\Phi(u,z) = \frac
{ 1 + \frac1{1 - T_{A \circ G}(z)} z A(u) (1 - S(u^{-1})) }
{ 1 - z A(u) S(u^{-1}) }.
$$
The interpretation is similar to that of \S \ref{subsec:kernel},
except that the number of time slots dedicated to a given customer is now
geometrically distributed with parameter $p$.

The stationary distribution has the generating function
$$\Pi(u) = \bigg( 1-\frac{\lambda}p \bigg) \frac{A(u)\left( 1-S(u^{-1}) \right)}{1-A(u)S(u^{-1})}.$$
Let $\gamma$ be the largest solution of the equation $A(u) S(u^{-1}) = 1$,
or, equivalently, $u = G(A(u))$.
Similarly to \S \ref{subsec:perf},
we obtain
\begin{theorem}
	For $\X\sim \Pi$,
$$
\proba(\X\ge R)
\underset{R\to \infty}{\sim}
\frac{ (1-\frac\lambda{p}) (A(\gamma)-1) }
{ (\gamma - 1) \big( A'(\gamma) S(\gamma^{-1}) - A(\gamma)p\gamma^{-2} \big)}
\gamma^{-R}.
$$
\end{theorem}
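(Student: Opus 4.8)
The plan is to mirror the asymptotic analysis carried out in \S\ref{subsec:perf} for the deterministic-service queue, now applied to the stationary generating function $\Pi(u) = \left( 1-\frac{\lambda}{p} \right) \frac{A(u)(1-S(u^{-1}))}{1-A(u)S(u^{-1})}$. First I would identify the dominant singularity of $\Pi$. The candidate singularities are the zeros of the denominator $1 - A(u)S(u^{-1})$; writing $D(u) = A(u)S(u^{-1})$, we have $D(1) = 1$, so $u=1$ is a root, but this is cancelled by the factor $1 - S(u^{-1})$ in the numerator (note $S(1) = 1$), exactly as $u=1$ is cancelled in the deterministic case. The next relevant root is $\gamma$, the largest solution of $A(u)S(u^{-1}) = 1$ (equivalently $u = G(A(u))$), and by the stability condition $p > \lambda$ one checks $\gamma > 1$ and $\gamma$ is a simple pole of $\Pi$.

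Next I would extract the singular behaviour of $\Pi$ near $u = \gamma$. Since $\gamma$ is a simple zero of $1 - D(u)$, a first-order expansion gives $1 - D(u) \sim -D'(\gamma)(u-\gamma)$ as $u \to \gamma$, where $D'(u) = A'(u)S(u^{-1}) - A(u)S'(u^{-1})u^{-2}$ and $S'(v) = p$, so $D'(\gamma) = A'(\gamma)S(\gamma^{-1}) - A(\gamma)p\gamma^{-2}$. Hence
\begin{equation*}
\Pi(u) \underset{u\to\gamma}{\sim}
\left(1-\tfrac{\lambda}{p}\right)
\frac{A(\gamma)(1 - S(\gamma^{-1}))}{-D'(\gamma)(u-\gamma)}.
\end{equation*}
To turn this into a tail estimate for $\proba(\X \ge R)$ I would pass to the error generating function $E(u) = \frac{1 - u\Pi(u)}{1-u}$ as in Eq.~\eqref{eq:error}; this has the same dominant pole $\gamma$ with residue scaled by the factor $\frac{-\gamma}{1-\gamma} = \frac{\gamma}{\gamma - 1}$, and the factor $1 - S(\gamma^{-1})$ combines with $A(\gamma)$ via the defining relation $A(\gamma)S(\gamma^{-1}) = 1$ to simplify $A(\gamma)(1 - S(\gamma^{-1})) = A(\gamma) - 1$. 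Applying the standard transfer theorem \cite[Th. IV.10]{FS09} for a simple pole then yields $[u^R]E(u) \sim C\,\gamma^{-R}$ with the constant $C$ obtained from the residue, giving precisely the stated formula.

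The main obstacle is the bookkeeping of the algebraic simplifications: one must verify carefully that $u = 1$ is genuinely a removable point of $\Pi$ (so that $\gamma$, not $1$, is the dominant singularity), that $\gamma > 1$ strictly under $p > \lambda$ (otherwise the tail would not decay), and that no complex singularity of equal or smaller modulus interferes — this last point relies on $A$ and $S$ being generating functions of genuine probability distributions and on the aperiodicity implicit in the model, and should be argued along the lines of \S\ref{subsec:perf}. Once the location and nature of the dominant singularity are pinned down, the remainder is a routine residue computation feeding into the transfer theorem, entirely parallel to the proof of Eq.~\eqref{eq:kernel_simple}.
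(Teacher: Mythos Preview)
Your proposal is correct and follows exactly the route the paper intends: the paper itself offers no detailed argument here, merely writing ``Similarly to \S\ref{subsec:perf}, we obtain'' the theorem, and your outline (locate the simple pole at $\gamma$, pass to $E(u)=\frac{1-u\Pi(u)}{1-u}$, extract the residue via the transfer theorem) is precisely that analogy carried out in full. The algebraic checks you list --- removability of $u=1$, the simplification $A(\gamma)(1-S(\gamma^{-1}))=A(\gamma)-1$ from $A(\gamma)S(\gamma^{-1})=1$, and the identification $D'(\gamma)=A'(\gamma)S(\gamma^{-1})-A(\gamma)p\gamma^{-2}$ --- are all correct.
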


The kernel method is also applicable to the case where
the server processes up to $c$ packets at each time slot, for some integer $c \geq 1$.
Assume that the distribution of the number of served packets
has generating function $S(u)$ if the queue contains at least $c$ packets,
and $S_k(u)$ if the queue contains exactly $k$ packets, for each $0 \le k < c$.
The equivalent of Eq.~\eqref{eq:oneone-practical} is now
\begin{align}
  \label{eq:oneone-multiservice}
  \Phi(u,z) &[ 1 - z A(u) S( u^{-1} ) ] =
  1 - \sum_{k=0}^{c-1} z A(u) ( S( u^{-1} ) - S_k( u^{-1} ) )
  \frac{u^k}{k!} \frac{\partial^k}{(\partial u)^k} \Phi(0,z).
\end{align}
We refer the reader to \cite[p.~508]{FS09} or \cite{BaFl02} for a detailed analysis of this equation,
and provide here a short version.
There are $c$ independent functions $(U_k(z))_{0 \leq k < c}$, analytic at $0$,
that cancel the second term of the left hand-side, because $S$ is a degree $c$ polynomial.
Thus, we obtain $c$ equations for the $c$ unknowns $\frac{\partial^k}{(\partial u)^k} \Phi(0,z)$ for $0 \leq k < c$.
Solving this system of equations and injecting the solution in Eq.~\eqref{eq:oneone-multiservice}
leads to the expression of $\Phi(u,z)$.

\subsection{Several flows with priorities}
\label{subsec:onetwo}

Consider the system in \figurename~\ref{fig:2flows}.
As in \S \ref{subsec:model},
the queue is initially empty and the server processes one packet at each time slot.
There are two flows of packets.
Flow $1$ has priority over flow $2$,
so that a packet from flow $1$ is served
whenever the queue contains at least one packet from this flow at the beginning of this time slot.
At each time slot $t\geq 1$, $\A_t$ packets from flow 1 and $\B_t$ packets from flow 2 arrive.
The sequences $(\A_t)_{t \ge 1}$ and $(\B_t)_{t \ge 1}$ are independent and i.i.d.\ with generating function $A$ and $B$ and mean $\lambda_A$ and $\lambda_B$, respectively,
such that $\lambda_A + \lambda_B < 1$.
We denote by $\X_t$ and $\Y_t$ the respective numbers of packets from flows $1$ and $2$ in the queue at the end of time slot $t$.
The system is then driven by the equations $\X_0 = 0$, $\Y_0 = 0$ and
\begin{equation}
	\label{eq:dyn2}
	\begin{cases}
    \X_{t+1} = (\X_t - 1)_+ + \A_{t+1}, \\
    \Y_{t+1} = (\Y_t - 1_{\{\X_t = 0\}})_+ + \B_{t+1},
	\quad \forall t \ge 0.
	\end{cases}
\end{equation}

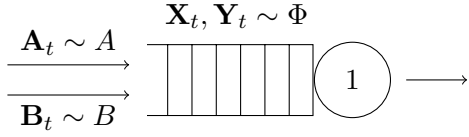
\begin{figure}
	\centering
	\begin{tikzpicture}
	\node[inner sep=0] (queue) {\fcfs[7]};
	\node[anchor=south] at ($(queue.north)+(.2cm,0)$) {$\X_t, \Y_t \sim \Phi$};
	
	\draw[->] ($(queue.west)-(1.6cm,0)+(0,.2cm)$) --
	node[midway, above] {$\A_t \sim A$}
	($(queue.west)+(0,.2cm)$);
	\draw[->] ($(queue.west)-(1.6cm,0)-(0,.2cm)$) --
	node[midway, below] {$\B_t \sim B$}
	($(queue.west)-(0,.2cm)$);
	
	\node[server] (server) at (queue.east) {1};
	\draw[->] ($(server.east)+(.2cm,0)$) -- ($(server.east)+(1cm,0)$);
	\end{tikzpicture}
  \caption{A single-server queue crossed by two flows.}
	\label{fig:2flows}
\end{figure}

We define a generating function for the state $(\X_t,\Y_t)_{t \ge 0}$,
with three variables $u$, $v$, and $z$, respectively representing the numbers of packets from flows 1 and 2 and the time:
$$
\Phi(u,v,z) = \sum_{n \ge 0} \sum_{m \ge 0} \sum_{t \ge 0}
\proba(\X_t = n, \Y_t = m) u^n v^m z^t.
$$

The equation satisfied by $\Phi$ follows from Eq.~\eqref{eq:dyn2}:
\begin{align*}
	\Phi(u,v,z) = 1 + z A(u)B(v)\big[
  &(\Phi(u,v,z) - \Phi(0,v,z))u^{-1}   \\
  &+  (\Phi(0,v,z) - \Phi(0,0,z))v^{-1}  
	+  \Phi(0,0,z)\big].
\end{align*} 
Here $(\Phi(u,v,z) - \Phi(0,v,z))u^{-1}$ represents the service of a packet from flow 1 (if any)
and $(\Phi(0,v,z) - \Phi(0,0,z))v^{-1}$ the service of a packet from flow 2 (if any, and when there is no packet from flow $1$).
This equation can be rewritten as
\begin{multline}
  \Phi(u,v,z) \left[ 1 - z A(u) B(v) u^{-1} \right] = 1 - z A(u) B(v) \times \\ 
  \label{eq:onetwo-practical}
  \left[
  \Phi(0,v,z) \left( u^{-1} - v^{-1} \right) 
  + \Phi(0,0,z) \left( v^{-1} - 1 \right)
  \right].
\end{multline}

We could find an expression for $\Phi(0,0,z)$ by applying twice the kernel method on this equation,
but we prefer a more intuitive approach.
The generating function $\Phi(0,0,z)$ is associated with the probability that the queue is empty.
This probability only depends on the global arrival process $(\A_t + \B_t)_{t \ge 1}$,
regardless of the division of packets into flows.
Therefore, we know from \S \ref{subsec:kernel} that
$$\Phi(0,0,z) = \frac{1}{1-T_{AB}(z)},$$
where $T_{AB}$ is the generating function of the size of a Galton-Watson tree with offspring distribution $AB$. 

We apply the kernel method to derive the expression of $\Phi(0,v,z)$.
Let us take $u = U(v,z)$ such that $U(v,z) = z B(v) A(U(v,z))$,
in order to cancel the left-hand side of Eq.~\eqref{eq:onetwo-practical}.
We obtain $U(v,z) = T_A(z B(v))$, which is again strongly related to a Galton-Watson tree.
The interpretation is similar to that of \S \ref{subsec:kernel},
except that $U(v,z)$ is now the generating function of the number of time slots passed and flow-$2$ packets arrived
during an inter-empty period of flow $1$.
The priority of flow $1$ ensures that no packet from flow $2$ is served in the meantime.
After simplifications, we get 
$$
\Phi(0,v,z) = \frac
{v +  \frac1{1-T_{AB}(z)} T_A(z B(v)) (v-1)}
{v -  T_A(z B(v))},
$$
and the expression for $\Phi(u,v,z)$ immediately follows.
It is not difficult to see that this method can be generalized to queues with more than two flows with a total order on priority levels. 

Suppose that we focus on the number of packets from flow $2$ in the stationary state.
We are then interested in the generating function $\Phi(1,v,z)$.
The same approach as in \S\ref{subsec:perf} can be used to obtain the following result.
\begin{theorem}
	The limit distribution of the number of packets of flow $2$ is given by the generating series
	$$\Pi(v) = (1-\lambda_A-\lambda_B) \frac
	{B(v) (1-v) (T_A(B(v)) -1 )}
	{(1 - B(v))(v - T_A(B(v))}.
  $$
\end{theorem}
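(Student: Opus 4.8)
The plan is to mimic \S\ref{subsec:perf} verbatim: produce a closed form for $\Phi(1,v,z)=\sum_{t\ge0}\E[v^{\Y_t}]z^t$, locate its singularities in the variable $z$ for a fixed $v\in[0,1)$, and read off $\Pi(v)$ from the behaviour at the dominant one. To obtain the closed form I would set $u=1$ in the equation for $\Phi(u,v,z)$ derived just above: since $A(1)=1$, the kernel $1-zA(u)B(v)u^{-1}$ collapses to $1-zB(v)$, and substituting the already-computed $\Phi(0,0,z)=\frac{1}{1-T_{AB}(z)}$ and $\Phi(0,v,z)=\frac{v+\frac{1}{1-T_{AB}(z)}T_A(zB(v))(v-1)}{v-T_A(zB(v))}$ and collecting terms expresses $\Phi(1,v,z)$ as a rational function of $z$, $B(v)$, $T_A(zB(v))$ and $\frac{1}{1-T_{AB}(z)}$.

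Next I would identify the candidate singularities of $z\mapsto\Phi(1,v,z)$: the point $z=1/B(v)$ from the factor $1-zB(v)$; the root $z^\ast$ of $v=T_A(zB(v))$ from the denominator $v-T_A(zB(v))$ inside $\Phi(0,v,z)$; and $z=1$, where $T_{AB}(z)=1$ (recall $(AB)'(1)=\lambda_A+\lambda_B<1$, so $T_{AB}(1)=1$ and $T_{AB}'(1)=(1-\lambda_A-\lambda_B)^{-1}$, making $z=1$ a simple pole of $\frac{1}{1-T_{AB}(z)}$). Exactly as $u_0/A(u_0)$ turned out to be a removable point in \S\ref{subsec:perf}, I expect $z=1/B(v)$ and $z=z^\ast$ to be removable: at $z=1/B(v)$ one has $T_A(zB(v))=1$, and a direct substitution shows the numerator of $\Phi(1,v,z)$ vanishes; at $z=z^\ast$ the defining relation $v=T_A(z^\ast B(v))$ combined with the Galton--Watson identity $T_A(s)=sA(T_A(s))$ gives $v=z^\ast A(v)B(v)$, i.e.\ $v=T_{AB}(z^\ast)$, which makes the numerator of $\Phi(0,v,z)$ (hence of $\Phi(1,v,z)$) vanish there. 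Since $[z^t]\Phi(1,v,z)=\E[v^{\Y_t}]\in[0,1]$, the radius of convergence of $\Phi(1,v,\cdot)$ is at least $1$, so these cancellations are forced, and $z=1$ is the only singularity of $\Phi(1,v,\cdot)$ in a disk of radius strictly greater than $1$.

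It then remains to extract the dominant term. Writing $\Phi(1,v,z)=\frac{g(z)}{1-T_{AB}(z)}+h(z)$ with $h$ analytic in a disk of radius $>1$, a short computation gives $g(1)=-\frac{B(v)(1-v)\bigl(1-T_A(B(v))\bigr)}{(1-B(v))\bigl(v-T_A(B(v))\bigr)}$, so $\Phi(1,v,z)\sim g(1)\,(1-T_{AB}(z))^{-1}$ as $z\to1$. Applying the transfer theorem of \cite{FS09} — the same one underlying Eq.~\eqref{eq:asympt-trees}, now with offspring mean $\lambda_A+\lambda_B$ — yields $[z^t]\Phi(1,v,z)\to g(1)(1-\lambda_A-\lambda_B)$. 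By \cite[p.~624]{FS09} this limit equals $\Pi(v)$, and rewriting $-(1-T_A(B(v)))=T_A(B(v))-1$ recovers the stated formula.

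The main obstacle, I expect, is the removability of the point $z=z^\ast$: it generally sits strictly inside the unit disk — near $v=1$ one has $z^\ast=T_A^{-1}(v)/B(v)=1-(1-\lambda_A-\lambda_B)(1-v)+o(1-v)$ — so a spurious pole there would dominate the asymptotics and give a wrong limit; the identity $T_{AB}(z^\ast)=v$ is what rescues the computation, and one should also check, using aperiodicity of $AB$, that $z=1$ is the only singularity on $|z|=1$.
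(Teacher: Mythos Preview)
Your proposal is correct and follows exactly the approach the paper indicates: it merely states that ``the same approach as in \S\ref{subsec:perf} can be used'' without spelling out any details, and you have faithfully (and carefully) carried out that programme---setting $u=1$, isolating the pole at $z=1$ coming from $\frac{1}{1-T_{AB}(z)}$, verifying that the other candidate singularities $z=1/B(v)$ and $z=z^\ast$ are removable, and extracting $\Pi(v)$ via the transfer theorem. Your observation that $T_{AB}(z^\ast)=v$ forces the cancellation at $z^\ast$ is the key point the paper leaves implicit.
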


Let $\delta$ be the largest solution of the equation $v = T_A(B(v))$
(as $T_A$ and $B$ are convex, there are exactly 2 solutions, and the smallest is 1).
Similarly to \S \ref{subsec:perf},
\begin{theorem}
	For $\Y\sim \Pi$,
\begin{equation}
  \label{eq:double_file}
  \proba(\Y\geq R) \underset{R\to\infty}{\sim} \frac{ (1-\lambda_A-\lambda_B)B(\delta)(\delta-1)}{(1-B(\delta))  (1- (T_A\circ B)'(\delta))} \delta^{-R} \text{.}
\end{equation}
\end{theorem}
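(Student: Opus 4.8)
The plan is to repeat the asymptotic analysis of \S\ref{subsec:perf} verbatim, only now in the variable $v$. Starting from the stationary generating function $\Pi$ of the previous theorem, the error generating function is, exactly as in Eq.~\eqref{eq:error},
$$
E(v) = \sum_{R \ge 0} \proba(\Y \ge R)\, v^R = \frac{1 - v\,\Pi(v)}{1-v},
$$
and the result will follow by singularity analysis of $E$ (see \cite{FS09}). It is convenient to write $\Pi(v) = N(v)/D(v)$ with $N(v) = (1-\lambda_A-\lambda_B)\,B(v)\,(1-v)\,(T_A(B(v))-1)$ and $D(v) = (1-B(v))\,(v - T_A(B(v)))$, so that $E(v) = \bigl(D(v) - v N(v)\bigr)/\bigl((1-v)D(v)\bigr)$. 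The candidate singularities of $E$ are then $v=1$, the zeros of $1-B(v)$ and of $v - T_A(B(v))$, and the dominant singularities of $B$ and of $T_A$ (the radius $\rho_{T_A}$ of \figurename~\ref{fig:Tu}).

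Next I would locate the dominant singularity and check it is a simple pole. By the convexity remark in the statement, $v = T_A(B(v))$ has exactly the two positive real solutions $1$ and $\delta$; the convex curve $v \mapsto T_A(B(v))$ crosses the diagonal transversally at $\delta$, so $(T_A\circ B)'(\delta)\neq 1$ and $v - T_A(B(v))$ has a simple zero at $\delta$. Since $\delta > 1$, neither $1-v$ nor $1-B(v)$ vanishes there, and $N(\delta)\neq 0$, so $E$ has a simple pole at $v = \delta$. The point $v=1$ is removable, because $\Pi$ is a probability generating function, so $\Pi(1)=1$ and the numerator $1-v\Pi(v)$ absorbs the factor $1-v$ (consistently, $T_A'(1) = (1-\lambda_A)^{-1}$ yields $(T_A\circ B)'(1) = \lambda_B/(1-\lambda_A)$, which forces $\Pi(1)=1$). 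Once one checks that no dominant singularity of $B$ or of $T_A$ and no further zero of $1-B(v)$ or of $v - T_A(B(v))$ lies in the closed disk of radius $\delta$, $\delta$ is the unique dominant singularity, and the transfer theorem for meromorphic functions gives $[v^R]E(v) \underset{R\to\infty}{\sim} -\operatorname{Res}_{v=\delta} E(v)\cdot \delta^{-R-1}$.

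It then remains to compute this residue. Since $D(\delta)=0$, one has $\tfrac{d}{dv}\bigl[(1-v)D(v)\bigr]\big|_{v=\delta} = (1-\delta)\,D'(\delta)$, and, again because $\delta - T_A(B(\delta))=0$, differentiating $D$ simplifies to $D'(\delta) = (1-B(\delta))\bigl(1 - (T_A\circ B)'(\delta)\bigr)$. Similarly $D(\delta) - \delta N(\delta) = -\delta N(\delta)$, and, using $T_A(B(\delta))=\delta$, $N(\delta) = (1-\lambda_A-\lambda_B)B(\delta)(1-\delta)(\delta-1) = -(1-\lambda_A-\lambda_B)B(\delta)(1-\delta)^2$. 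Hence $-\operatorname{Res}_{v=\delta}E(v) = \dfrac{\delta\,N(\delta)}{(1-\delta)\,D'(\delta)} = \dfrac{\delta\,(1-\lambda_A-\lambda_B)\,B(\delta)\,(\delta-1)}{(1-B(\delta))\,(1-(T_A\circ B)'(\delta))}$, and multiplying by $\delta^{-R-1}$ the spurious factor $\delta$ cancels, leaving exactly the announced estimate (one also checks the constant is positive: $1-B(\delta)<0$ and $1-(T_A\circ B)'(\delta)<0$ since $\delta>1$).

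I expect the only genuinely delicate step to be the \emph{global} part of the singularity analysis, namely verifying that nothing of modulus $\le \delta$ competes with the pole at $\delta$: one must rule out early singularities of $B$ and of $T_A$, and spurious complex zeros of $1-B(v)$ or of $v-T_A(B(v))$. Under the regularity assumptions made on $A$ and $B$ this should follow from the same monotonicity and convexity arguments already used for $\beta$ in \S\ref{subsec:perf}; everything else is a routine residue computation identical in spirit to the single-flow case and to \cite{BaFl02}.
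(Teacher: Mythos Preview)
Your proposal is correct and follows exactly the approach the paper intends: the paper's own ``proof'' is nothing more than the phrase ``Similarly to \S\ref{subsec:perf},'' and you have faithfully carried out that program---form $E(v)=(1-v\Pi(v))/(1-v)$, identify $\delta$ as the dominant simple pole via the convexity argument, and read off the coefficient asymptotics by a residue. Your residue computation is right, and your caveat about the global singularity check (ruling out competing singularities of $B$, $T_A$, and complex zeros of the denominator) is precisely the point the paper also sweeps under the rug.
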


\subsection{Several queues}
\label{subsec:several}

We can also use generating functions to describe the dynamics of networks of queues.
As an example, consider the network of \figurename~\ref{fig:2queues},
consisting of two single-server queues.
At each time slot $t \ge 1$,
$\A_t$ packets arrive at queue $1$
and $\B_t$ packets arrive at queue $2$.
As before,
the sequences $(\A_t)_{t \ge 1}$ and $(\B_t)_{t \ge 1}$
are independent and i.i.d.\ with generating function $A$ and $B$
and mean $\lambda_A$ and $\lambda_B$,
respectively,
such that $\lambda_A + \lambda_B < 1$.
Additionally, the packets served at queue $1$ are subsequently forwarded to queue $2$ for service.
We let $\X_t$ and $\Y_t$ denote the numbers of packets at queues $1$ and $2$, respectively, at time $t$.

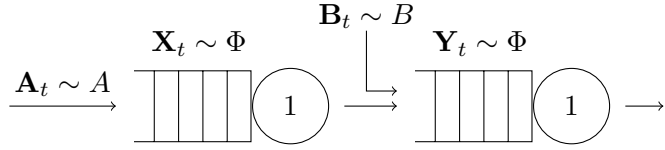
\begin{figure}[htbp]
  \centering
  \begin{tikzpicture}
    \node[inner sep=0] (queue1) {\fcfs[5]};
    \node[anchor=south] at ($(queue1.north)+(.2cm,0)$) {$\X_t \sim \Phi$};
    \node[server] (server1) at (queue1.east) {1};

    \node[inner sep=0] (queue2) at ($(queue1.east)+(2.8cm,0)$) {\fcfs[5]};
    \node[anchor=south] at ($(queue2.north)+(.2cm,0)$) {$\Y_t \sim \Phi$};
    \node[server] (server2) at (queue2.east) {1};

    \draw[->] ($(queue1.west)-(1.4cm,0)$) --
    node[midway, above] {$\A_t \sim A$}
    (queue1.west);
    \draw[->] ($(queue2.west)-(.4cm,0)+(0,1cm)$) |-
    node[near start, above, yshift=.3cm] {$\B_t \sim B$}
    ($(queue2.west)+(0,.2cm)$);

    \draw[->] ($(server1.east)+(.2cm,0)$) -- (queue2.west);

    \draw[->] ($(server2.east)+(.2cm,0)$) -- ($(server2.east)+(.7cm,0)$);
  \end{tikzpicture}
  \caption{A network of two queues crossed by two flows.}
  \label{fig:2queues}
\end{figure}

The dynamics of the system, which is initially empty, are driven by the equations $\X_0 = 0$, $\Y_0 = 0$ and
\begin{equation}
\label{eq:dyn2q}
  \left\{
    \begin{array}{ll}
      \X_{t+1} & = (\X_t -1)_+ + \A_{t+1}, \\
      \Y_{t+1} & = (\Y_t - 1)_+ + \B_{t+1} + 1_{\{\X_t>0\}},
      \quad \forall t \ge 0.
    \end{array}
  \right.
\end{equation}
We define a generating function for the state $(\X_t,\Y_t)_{t \ge 0}$,
with three variables $u$, $v$, and $z$, respectively representing the numbers of packets at queues $1$ and $2$ and the time:
$$
\Phi(u,v,z) = \sum_{n \ge 0} \sum_{m \ge 0} \sum_{t \ge 0}
\proba(\X_t = n, \Y_t = m) u^n v^m z^t.
$$
This function $\Phi$ satisfies the following equation:
\begin{align}
  \nonumber
  \Phi(u,v,z) [ 1 - z A(u) B(v)u^{-1} ]
  = 1 - &z A(u) B(v) \\
  \nonumber
  &[
  \Phi(u,0,z) ( u^{-1} - vu^{-1} )
  + \Phi(0,v,z) ( u^{-1} - v^{-1} ) \\
  \label{eq:2-2}
  &\phantom{[}+ \Phi(0,0,z) ( vu^{-1} + v^{-1} - u^{-1} - 1 ) ].
\end{align}

The kernel method cannot be applied directly.
Indeed, we need to compute three generating functions
($\Phi(u,0,z)$, $\Phi(0,v,z)$, and $\Phi(0,0,z)$),
while we can only apply the kernel method (at most) twice.
It is, however, possible to find an additional relation between these functions: 
\begin{align*}
  \Phi(u,0,z) &= 1 + zA(u)B(0) \big[ \Phi(0,0,z) + [v^1]\Phi(0,v,z) \big].
\end{align*}
This relation can be obtained in two different ways.
We can derive Eq.~\eqref{eq:2-2} according to $v$ at $v=0$.
Alternatively, we can go back to the system dynamics:
queue $2$ is empty at the end of some time slot $t \ge 1$ if
it does not receive any external arrival during this time slot
and, at the end of time slot $t-1$,
queue $1$ was empty and queue $2$ contained at most one packet.

This equation gives the relation
\begin{align*}
  \Phi(u,0,z) &=  1 + \frac{A(u)}{A(0)}[\Phi(0,0,z)- 1].
\end{align*}
We are now in a position to apply the kernel method,
by defining first $U(v,z) = zA(U(v,z))B(v)$ and then $V(z) = z A(V(z)) B(V(z))$.
The generating function $\Pi$ of the stationary distribution of the number of packets in queue 2 can be computed similarly to the previous cases.
For simplicity, we only give its asymptotic behavior:

\begin{theorem}
  With $\delta$ previously defined, for $\Y\sim\Pi$, 
  \begin{equation}
    \label{eq:double_server}
    \proba(\Y\geq R) \underset{R\to\infty}{\sim} \frac{ (1-\lambda_A-\lambda_B)\delta(\delta-1)}{(1-B(\delta))  (1-(T_A\circ B)'(\delta))} \delta^{-R} \text{.}
  \end{equation}
\end{theorem}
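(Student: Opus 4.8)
The plan is to finish the computation of $\Phi$ begun in \S\ref{subsec:several} and then run the same two-stage singularity analysis as in \S\ref{subsec:perf} and \S\ref{subsec:onetwo}. After substituting the relation $\Phi(u,0,z) = 1 + \frac{A(u)}{A(0)}\left(\Phi(0,0,z)-1\right)$ into Eq.~\eqref{eq:2-2}, only the two series $\Phi(0,v,z)$ and $\Phi(0,0,z)$ remain unknown, so two uses of the kernel method should suffice. First I set $u = U(v,z)$ with $U(v,z) = zA(U(v,z))B(v)$, that is $U(v,z) = T_A\!\left(zB(v)\right)$, taken as the small root analytic at $0$ as in Eq.~\eqref{eq:gw}; this cancels the left-hand side of~\eqref{eq:2-2}. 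Using $zA(U)B(v) = U$ and eliminating $\Phi(U,0,z)$ via the relation above, I am left with a single affine identity in $\Phi(0,v,z)$ and $\Phi(0,0,z)$ whose coefficient of $\Phi(0,v,z)$ is, up to a nonzero factor, $v - U(v,z)$.

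Next I set $v = V(z)$, where $V(z)$ is the small root of $V(z) = zA(V(z))B(V(z))$; since $(AB)(x)=A(x)B(x)$, this is $V = T_{AB}$, the Galton--Watson series already used in \S\ref{subsec:onetwo}. Because $U(V(z),z) = V(z)$, the coefficient of $\Phi(0,v,z)$ vanishes, the identity collapses to an equation for $\Phi(0,0,z)$ alone, and together with $\Phi(V(z),0,z) = (1-V(z))^{-1}$ (a by-product of the collapse) it determines $\Phi(0,0,z)$ in closed form. Back-substitution then yields $\Phi(0,v,z)$, $\Phi(u,0,z)$ and finally $\Phi(u,v,z)$ from~\eqref{eq:2-2}. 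Throughout, each implicitly defined series must be taken as the branch analytic at $0$ (the small root), whose existence and convergence radius larger than $1$ follow from the graphical argument of \figurename~\ref{fig:Tu} applied to the offspring laws $A$, $x\mapsto A(x)B(v)$ and $AB$ (all of mean $<1$ under $\lambda_A+\lambda_B<1$); this is what makes the resulting expressions legitimate probability generating functions.

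For the asymptotics I specialise to $u=1$, the marginal in $\Y$. For fixed $v=v_0$ near $1$, the factor $1 - zB(v_0)$ in the denominator of $\Phi(1,v_0,z)$ produces only an \emph{apparent} pole, which cancels against the numerator exactly as the pole $u_0/A(u_0)$ does in \S\ref{subsec:perf}; hence the dominant singularity is the simple pole at $z=1$, and \cite[p.~624]{FS09} gives $\Pi(v) = \lim_{t\to\infty}[z^t]\Phi(1,v,z)$ as its residue. One finds, after simplification,
$$\Pi(v) = (1-\lambda_A-\lambda_B)\,\frac{v\,(1-v)\bigl(T_A(B(v))-1\bigr)}{\bigl(1-B(v)\bigr)\bigl(v-T_A(B(v))\bigr)},$$
which differs from the priority case of \S\ref{subsec:onetwo} only in that the numerator carries a factor $v$ instead of $B(v)$ --- the extra packet fed to queue~$2$ whenever queue~$1$ is busy; the normalisation $\Pi(1)=1$ fixes the constant to $1-\lambda_A-\lambda_B$. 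Following Eq.~\eqref{eq:error}, the error series $E(v) = \frac{1-v\Pi(v)}{1-v}$ has a removable singularity at $v=1$ and is otherwise singular at the zeros of $v-T_A(B(v))$; since $T_A\circ B$ is convex there are exactly two, the smaller being $1$ and the larger being $\delta$, and $\delta$ is a simple pole of $E$. Singularity analysis (\cite[Th.~V.1]{FS09}, in the spirit of Eq.~\eqref{eq:asympt-trees}) then gives $\proba(\Y\ge R)\sim -\operatorname{Res}_{v=\delta}E(v)\cdot\delta^{-R-1}$, and computing the residue with $\frac{d}{dv}\bigl[v-T_A(B(v))\bigr]_{v=\delta} = 1-(T_A\circ B)'(\delta)$ and $T_A(B(\delta))=\delta$ yields exactly the constant in Eq.~\eqref{eq:double_server}.

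The steps that need care rather than calculation are: choosing the correct small-root branches and proving they are analytic at $0$ with radius larger than $1$, so that the stationary limit even exists; verifying that the kernel-induced poles of $\Phi(1,v_0,z)$ really are removable, so that $z=1$ genuinely dominates; and the residue bookkeeping at $v=\delta$, in particular tracking how the factor $\delta$ (rather than $B(\delta)$, as in \S\ref{subsec:onetwo}) survives --- this is where the extra relation $\Phi(u,0,z) = 1 + \frac{A(u)}{A(0)}(\Phi(0,0,z)-1)$, i.e.\ the coupling between the two queues, leaves its mark on the final constant. I expect the second of these --- confirming the cancellation of the apparent singularities --- to be the main obstacle, since it is what licenses the entire asymptotic argument.
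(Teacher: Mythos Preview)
Your proposal is correct and follows the same approach the paper sketches: the two kernel substitutions $U(v,z)=T_A(zB(v))$ and $V(z)=T_{AB}(z)$ together with the extra relation $\Phi(u,0,z)=1+\frac{A(u)}{A(0)}(\Phi(0,0,z)-1)$ are exactly what the paper indicates before stating the theorem without further details. Your intermediate formula for $\Pi(v)$, differing from the priority case only by the factor $v$ in place of $B(v)$, is consistent with the announced asymptotic constant, and the residue computation you outline indeed produces Eq.~\eqref{eq:double_server}.
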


\subsection{Non-independent arrivals}
\label{subsec:non-indep}

The analysis can be extended to networks with more generic arrival processes.
We take the network of \figurename~\ref{fig:2flows} as an example.
\begin{itemize}
	\item The arrivals of flows $1$ and $2$ may be dependent.
    The global arrival process is then described by a generating function $A(u,v)$ that cannot be written as a product $A(u)B(v)$.
  \item Within each flow, the numbers of arrivals at different time slots may not be i.i.d. anymore.
    Instead, they may be described by a modulated process (which includes modulated Markov On-Off processes).
    The modulation is described by a finite Markov chain.
    The system dynamics are then described by a system of equations on generating functions (one per state of the Markov chain). 
\end{itemize}

\section{Numerical evaluation}
\label{sec:simu}

\label{sec:simu}
We tested our formulas against simulations in three different scenarios:
the single-server case of \S\ref{sec:oneone},
the multiflow single-server case of \S\ref{subsec:onetwo}
and the tandem network of two single-server queues of \S\ref{subsec:several}. 
The service is deterministic.

Performing simulations consists in computing the stationary distribution of the truncated processes (the number of packets in each queue never exceeds 200)
whose dynamics are described by Eqs.~\eqref{eq:dyn1}, \eqref{eq:dyn2} or~\eqref{eq:dyn2q}.
The approximation of the stationary distribution is obtained by iteratively computing the distribution after $t$ steps for a large enough $t$ (the stopping criterion is when the distance in total variation between the $t$-th and the  $t+1$-th distribution is less than $10^{-12}$).

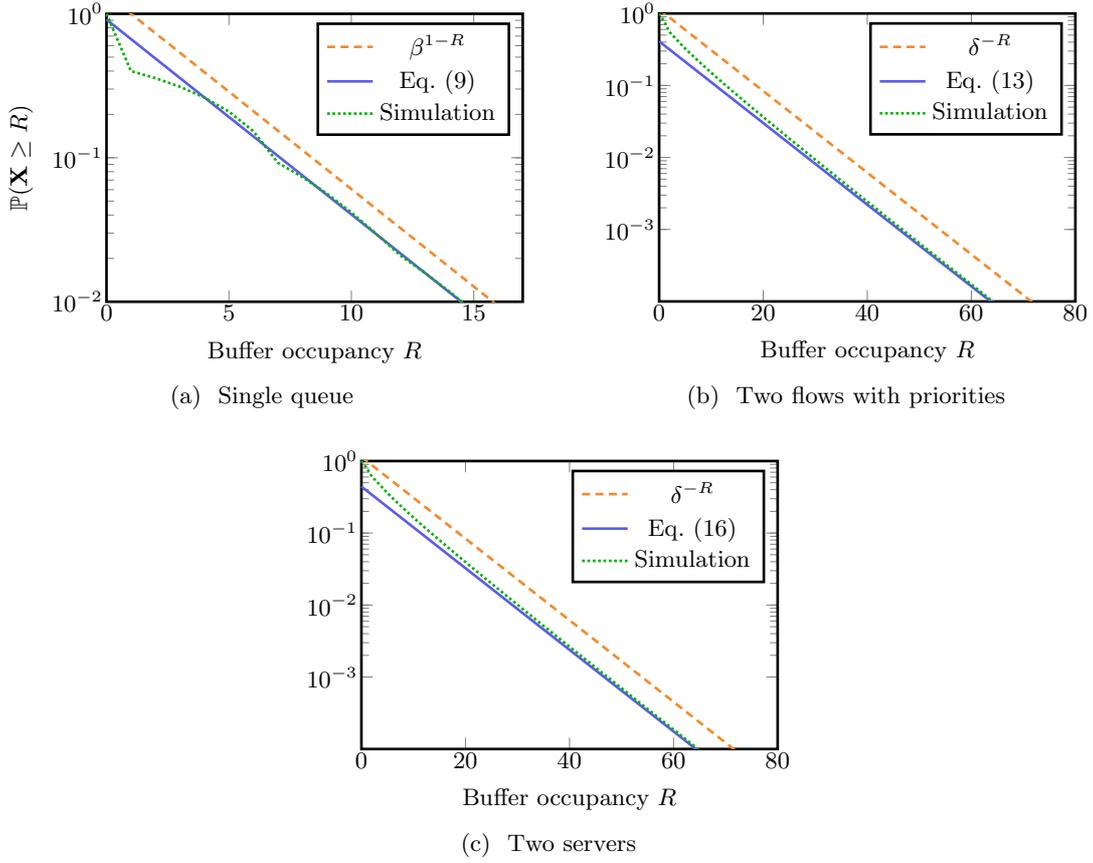
\begin{figure*}[t!]
  \centering
	\subfloat[\label{fig:single} Single queue]{
		\pgfplotstableread[col sep=comma]{file_simple.txt}\r
		\begin{tikzpicture}
		\begin{semilogyaxis}[
		tikzLocality,
		xlabel={Buffer occupancy $R$},
		ylabel={$\proba(\X \geq R)$},
		xmax = 17,
		ymin = .01,
		width=.48\linewidth, height=5.4cm,
		]		
		\addplot[doobs] table [x = R, y = doobs]{\r};
		\addlegendentry{$ \beta^{1-R} $}
		\addplot[noyau] table [x = R, y = noyau]{\r};
		\addlegendentry{Eq.~\eqref{eq:kernel_simple}}
		\addplot[simu] table [x = R, y = simu]{\r};
		\addlegendentry{Simulation}
		\end{semilogyaxis}
		\end{tikzpicture}
	}
  \hfill
	\subfloat[\label{fig:double_file} Two flows with priorities]{
		\pgfplotstableread[col sep=comma]{file_double.txt}\r
		
		\begin{tikzpicture}
		\begin{semilogyaxis}[
		tikzLocality,
		xlabel={Buffer occupancy $R$},
		width=.48\linewidth, height=5.4cm,
    ytick={.001,.01,.1,1.},
		]
		
		\addplot[doobs] table [x = R, y = doobs]{\r};
		\addlegendentry{$ \delta^{-R} $}
		\addplot[noyau] table [x = R, y = noyau]{\r};
		\addlegendentry{Eq.~\eqref{eq:double_file}}
		\addplot[simu] table [x = R, y = simu]{\r};
		\addlegendentry{Simulation}
		
		\end{semilogyaxis}
		\end{tikzpicture}
	}
  \\
	\subfloat[\label{fig:double_server} Two servers]{
		\pgfplotstableread[col sep=comma]{double_server.txt}\r
		\begin{tikzpicture}
		\begin{semilogyaxis}[
		tikzLocality,
		xlabel={Buffer occupancy $R$},
		width=.48\linewidth, height=5.4cm,
    ytick={.001,.01,.1,1.},
		]
		
		\addplot[doobs] table [x = R, y = doobs]{\r};
		\addlegendentry{$ \delta^{-R} $}
		\addplot[noyau] table [x = R, y = noyau]{\r};
		\addlegendentry{Eq.~\eqref{eq:double_server}}
		\addplot[simu] table [x = R, y = simu]{\r};
		\addlegendentry{Simulation}
		\end{semilogyaxis}
		\end{tikzpicture}
	}
	\caption{Numerical evaluation of the kernel method. Parameters $A = D_{2/30, 6}$ and $B = D_{2/5, 1}$.}
	\label{fig:simus}
\end{figure*}

Each arrival process has a bimodal distribution
with generating function $D_{p,M} (u) = (1-p) + pu^M$, for some $p$ and $M$:
at each time slot, either $M$ packets arrive, which occurs with probability $p$, or no packet arrives.
With this distribution, the arrival rate is $D'_{p,M}(1) = Mp$.
In the numerical results of \figurename~\ref{fig:simus}, we take $A = D_{2/30, 6}$ and $B = D_{2/5, 1}$. 
This choice of the functions $ A $ and $ B $ is arbitrary (other distributions lead to similar observations)

Fig.~\ref{fig:single} illustrates the case of a single queue.
The curve $\beta^{1-R}$ is the one that would be obtained using Doob's inequality from~\cite{PC14} (we do not use $\beta^{-R}$ because we consider the size of the queue before service and not after as in~\cite{PC14}). The simulation confirms that we obtain the exact asymptotic behavior, and shows that we improve Doob's inequality by a factor $1.5$.  
We remark that the simulation curve has some irregularities for small values of $R$. This is explained by the arrival of packets in batches of $6$. It is possible to compute the exact error bound from Eq.~\eqref{eq:error} by deriving the first terms explicitly: $[u^R]E(u) = \frac{1}{R!}\frac{d^R}{(du)^R} E(u)\big|_{u=0}$. 

\figurename~\ref{fig:double_file} illustrates the case of a single-server queue with two flows.
We focus on the buffer occupancy of flow 2.
Since flow 1 has priority, its buffer occupancy is still given by Fig.~\ref{fig:single}.
Again, the simulation validates our theoretical results.
Up to our knowledge, there is no formula similar to Doob's inequality, so the curve $\delta^{-R}$ only mimics a \emph{Doob-like inequality}.

Fig.~\ref{fig:double_server} illustrates the error bound in the second queue of the tandem network of \S\ref{subsec:several}.
The error bound differs only from the previous case by a constant factor. 

Although we obtain the exact asymptotic in those two cases, it seems that these are lower bounds of the error.
Indeed, we only computed the first term. But, once again, as we were able to compute an exact formula for the error bounds, more terms are derivable using Taylor expansions.

\section{Conclusion}

In this paper, we have demonstrated on simple examples that methods from analytic combinatorics can be successfully applied to the analysis of queueing systems. 
We have focused on computing backlog bounds, but we believe delay bounds can be derived by using the same techniques as in~\cite{PC14}, for the FIFO, EDF (earliest-deadline-first)  and priorities policies. 
Moreover, combining the computations described in \S\ref{sec:ext} would allow other service policies to enter our framework, in particular some discrete version of GPS (generalized processor sharing).
Following the approach of \cite{BaFl02},
we could also consider a continuous-time extension of our work based on Laplace transforms.
The greatest challenge is to cope with networks of queues. A simple example with two queues has been analyzed. The same analysis can be extended for more than two queues, but this analysis is still partial since the packets are aggregated at each queue. Further investigation needs to be done, in particular in view of the techniques presented in~\cite{BoMi10}.

\end{document}